\documentclass[prl,twocolumn,showpacs,superscriptaddress,preprintnumbers,floatfix]{revtex4}

\usepackage{dcolumn}
\usepackage{amsmath}
\usepackage{amssymb}

\usepackage[dvips]{graphicx}
\usepackage{bm}   % bold math
\usepackage{verbatim}
\usepackage{amsthm}
\usepackage{color}

\newtheorem{thm}{Theorem}

\def\future {X_0^{N-1}}
\def\past {X_{-N}^{-1}}

\def\States {{\cal S}} % set of state
\def\State {{ S}}    % single state
\def\state {s}
\def\lastS {{S}_{i-1}}
\def\nextS {{S}_i}

\def\StateN {{S}_N}
\def\Sone {{S}_{-1}}
\def\Cmu   {C_\mu}

\def\EE   {{\bf E} }

\def\h   {{h} }
\def\hR   {{h^R} }
\def\ier { h_{erase}}
\def\LL { H_{erase}}
 %{H(S_0 | X_0^\infty)}
\def\eM {{$\epsilon$-machine}}
\def\Abet   {\cal{A} }

\def\PI {predictive information}

\begin{document}

\title{Information-theoretic bound on the energy cost of stochastic simulation}

\author{Karoline Wiesner}
\email{k.wiesner@bristol.ac.uk}
\affiliation{School of Mathematics, Centre for Complexity Sciences, University of Bristol, University Walk,  Bristol BS8 1TW,
United Kingdom}
\author{Mile Gu}
\author{Elisabeth Rieper}
\affiliation{Centre for Quantum Technologies, National University of
Singapore, 3 Science Drive 2, S15-03-18, Singapore 117543, Singapore}
\author{Vlatko Vedral}
\affiliation{Atomic and Laser Physics, Clarendon Laboratory, University of
Oxford, Parks Road, Oxford OX1 3PU, United Kingdom}
\affiliation{Department of Physics, National University of Singapore, 2
Science Drive 3, Singapore 117543, Singapore}
\affiliation{Centre for Quantum Technologies, National University of
Singapore, 3 Science Drive 2, S15-03-18, Singapore 117543, Singapore}

\date{\today}

\bibliographystyle{unsrt}

\begin{abstract}

Physical systems are often simulated using a stochastic computation where different final states result from identical initial states. Here, we derive the minimum energy cost of simulating a complex data set of a general physical system with a stochastic computation. We show that the cost is proportional to the difference between two information-theoretic measures of complexity of the data -- the \emph{statistical complexity} and the \emph{{\PI}}. 
We derive the difference as the amount of information erased during the computation. Finally, we illustrate the physics of information by implementing the stochastic computation as a Gedankenexperiment of a Szilard-type engine. The results create a new link between thermodynamics, information theory, and complexity.

\pacs{
02.50.Ey  Stochastic processes, 
05.70.-a Entropy thermodynamics,
 89.70.Cf Entropy information theory,
89.70.-a   Information theory,
89.75.-k Complex systems, 
}  
 \begin{center}

arxiv.org/phys/0905.2918
 \end{center}
\end{abstract}

\maketitle

The idea of physics as information has a long history. The concept of entropy, at the heart of information theory, originated in the theory of thermodynamics. It was  Maxwell and Boltzmann who, in the beginning of the 19$^{th}$ century, recognized the intricate link between probability
distributions over configurations and thermodynamics. This laid the foundation to the field of statistical mechanics. The similarity between the thermodynamic entropy and the information entropy, introduced in 1948 by Shannon, lead to a whole new perspective on physical processes as storing and processing information. It also lead to paradoxes such as Maxwell's demon which seemed to suggest that work could be generated from heat only with the use of information, which would violate the second law of thermodynamics (for a review, see Refs. \cite{bennett_thermodynamics_1982, maruyama_colloquium:physics_2009}). The paradox was solved independently by Penrose and Bennett, in considering the entropy creation caused by erasing information \cite{penrose_foundations_1970, bennett_thermodynamics_1982}.

With the insight that erasing information generates entropy, Zurek found limits on the thermodynamic cost of deterministic computation using algorithmic complexity \cite{zurek_thermodynamic_1989}. A \emph{deterministic computation} generates a unique output given a particular input. Repeated computations yield identical results. A \emph{stochastic computation} on the other hand yields different outputs for identical inputs. It is a useful descriptor of natural processes which are often stochastic and have different final states given `identical' initial states (within a given finite resolution or, in quantum mechanics, even infinite resolution). Here, we derive the minimum energy cost of simulating a complex data set with a stochastic computation. We show that it is proportional to the difference between the \emph{statistical complexity}  \cite{crutchfield_inferring_1989} and the \emph{{\PI}} of the data \cite{bialek_predictability_2001}. We derive this difference as the amount of information erased during the computation. Finally, we illustrate the physics of information by ``implementing'' the stochastic computation with a Szilard-type engine.

It has been shown that the difference between these two measures arises from an asymmetry in the transport of information forward and backward in time \cite{crutchfield_times_2009}. In this paper we give a physical explanation of this asymmetry together with new mathematical proofs of the relevant information theory.

Consider the following model of stochastic computation: A given computational device is in some initial state and outputs length-$N$ strings of symbols $x^N$ according to probability distribution $\Pr(X^N)$. If the distribution $\Pr(X^N)$ is statistically indistinguishable from that of an observed data set of a physical system the symbol sequence $x^N$ is a simulation of that system. Where the probability distribution $\Pr(X^N)$ is a very uncompressed description, one step away from raw data of an experiment, the computational device simulating it is a very compact description, a summary of the regularities, a first step toward a `theory' explaining an experiment. We call the joint probability distribution of past and future observations $\Pr(X_{-N}^{N-1})$ a \emph{stochastic process}.
The provably unique minimal (in terms of information stored) and optimal (in terms of prediction) such computation-theoretic representation summarising the regularities of a stochastic process is a so-called {\eM} \cite{crutchfield_inferring_1989, shalizi_computational_2001}. It consists of an output alphabet $\Abet$,  a set of \emph{causal states} $\States$ and stochastic transitions between them. For every pair of states $s,s^\prime \in \States$ probabilities $\Pr(\nextS =\state^\prime | \lastS=\state, X_{i} = x)$ are defined  for going from state $s$ to state $s^\prime$ while outputting symbol $x \in \Abet$. The \emph{statistical complexity} of a process is defined as the Shannon entropy over the stationary distribution of its {\eM}'s causal-states  \footnote{For definitions of Shannon entropy,
conditional entropy, and mutual information, see for example \cite{cover_elements}}:
\begin{align}
\Cmu := H(\States)~.
\end{align}
$\Cmu$ is the
number of bits required to specify a particular causal state in the {\eM}. It is the number of bits that need to be stored to optimally predict future data points.

The \emph{\PI} of a data set is given by the mutual information between the two halves, e.g. the past data and the future data 
\cite{shalizi_computational_2001,
crutchfield_regularities_2003}:
\begin{align}
\label{eq.EE}
\EE = \lim_{N\rightarrow \infty} I[\past ; \future]~.
\end{align}
where $\past$ and $\future$ are strings of random variables representing observations of a stochastic process. Predictive information is also known under the name of excess entropy,  effective measure complexity, and stored information (see \cite{crutchfield_regularities_2003} and refs. therein). For the following thermodynamic development of stochastic computation we find the name predictive information most suitable.
The \PI, measured in bits, can be interpreted as the average number of bits a process stores at a given point in time and ``transmits'' to the  future. It is known that \begin{align}\label{eq.ECmu} \Cmu = \EE + H(S_{-1} | X_0^\infty)  \end{align} and hence that $ \EE \leq \Cmu~$ \cite[Theorem 5]{shalizi_computational_2001}.
Two important properties of {\eM}s of relevance here are that the next state given the last state and the current symbol is uniquely determined (Eq.~\ref{eq.determ}) and that the state after the observation of a long enough sequence of symbols is uniquely determined (Eq.~\ref{eq.sync})
\cite{shalizi_computational_2001}:
\begin{align}
\label{eq.determ}
H(S_{N} | S_{N-1}X_N) &= 0~~\text{(deterministic-stochastic)}\\
\label{eq.sync}
\lim_{N\rightarrow \infty}H(S_N | X_0^N) &= 0~~\text{(synchronising)}
\end{align}

Successful inference of {\eM}s ranges from dynamical systems \cite{crutchfield_inferring_1989}, spin systems \cite{crutchfield_statistical_1997}, and crystallographic data \cite{varn_discovering_2002} to molecular dynamics \cite{li_multiscale_2008}, atmospheric turbulence \cite{palmer_complexity_2000}, and self-organisation \cite{shalizi_quantifying_2004}.\\

Landauer defines an operation to be \emph{logically irreversible} if the output of the operation does not uniquely define the inputs \cite{landauer_irreversibility_1961}. In other words, logically irreversible operations erase information about the computational device's preceding logical state. Landauer's insight was that logical information erasure costs energy \cite{landauer_irreversibility_1961}. In the following we discuss how Landauer's principle and logical irreversibility apply to stochastic computation and, in particular, to the computation of a stochastic process. For a given {\eM} the current state and the next symbol determine the next state uniquely (Eq.~\ref{eq.determ}). The reverse, however, is not necessarily true. Given the current state and last output symbol, the previous state is not always uniquely determined. In this case the {\eM} is \emph{logically irreversible}. Following Landauer's definition of irreversibility, we define the information erasure per computational step of a given {\eM}  as the entropy of the previous state ($\lastS$) given the current state ($\nextS$) and last output symbol ($X_i$) :
\begin{equation}
\label{eq.ier}
\ier := H(\lastS | X_i {\State}_i),
\end{equation}
For later use, we also define $\ier^N := H(\State_{i-N} | X_{i-N}^i {\State}_i)$ for $i>N$ and $\LL := \lim_{N\to\infty} \ier^N$.
$\ier$ can be calculated from the {\eM} directly. It quantifies the average \emph{irreversibility} of a computational step of the {\eM}. We now show that strict equality between $\EE$ and $\Cmu$ holds if and only if the {\eM} is fully logically reversible, i.e. {\it iff} $\ier = 0$.

\begin{thm}
 The {\PI} $\EE$ of a stochastic process is equal to the statistical complexity $\Cmu$  if and only if the information
erasure of the corresponding {\eM} is zero:
\begin{equation}
\label{eq.cec}
\Cmu = \EE \Leftrightarrow \ier = 0~.
\end{equation}
\end{thm}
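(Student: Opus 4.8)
The plan is to reduce the statement to Eq.~\ref{eq.ECmu}. Since $\Cmu = \EE + H(\Sone \mid X_0^\infty)$, the identity $\Cmu = \EE$ holds precisely when $H(\Sone \mid X_0^\infty) = 0$, i.e.\ when the causal state $\Sone$ immediately preceding the observed future is almost surely a deterministic function of that future $X_0^\infty$. So it suffices to prove
\[
H(\Sone \mid X_0^\infty) = 0 \quad\Longleftrightarrow\quad \ier = 0 .
\]

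For the implication $\ier = 0 \Rightarrow H(\Sone \mid X_0^\infty) = 0$ I would exploit reverse determinism. The hypothesis $\ier = H(\State_{i-1}\mid X_i \State_i) = 0$ means that for a fixed function $\rho$ and every $i$ one has $\State_{i-1} = \rho(X_i,\State_i)$ almost surely; composing $\rho$ with itself $N$ times exhibits $\Sone$ as a deterministic function of $(X_0^N,\StateN)$, so $H(\Sone\mid X_0^N\StateN) = 0$ and hence $H(\Sone\mid X_0^N) \le H(\Sone,\StateN\mid X_0^N) = H(\StateN\mid X_0^N)$. Sending $N\to\infty$ and invoking the synchronising property (Eq.~\ref{eq.sync}), the right-hand side vanishes; since $H(\Sone\mid X_0^\infty)\le H(\Sone\mid X_0^N)$ for every $N$, we conclude $H(\Sone\mid X_0^\infty) = 0$. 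This direction is essentially bookkeeping with Eqs.~\ref{eq.determ} and \ref{eq.sync}.

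The converse $\Cmu = \EE \Rightarrow \ier = 0$ is the substantive step, and I would argue the contrapositive. If $\ier > 0$ then some term $H(\Sone \mid X_0 = x, \Szero = s')$ is positive, which forces two distinct causal states $s_1 \neq s_2$ that both transition to $s'$ on the symbol $x$ with positive probability of occurring, i.e.\ $\Pr(\Sone = s_j, X_0 = x, \Szero = s') > 0$ for $j = 1,2$. The key point is that such a merge makes the two incoming situations indistinguishable from the future: conditioned on $\{\Sone = s_1, X_0 = x\}$ and on $\{\Sone = s_2, X_0 = x\}$ one has $\Szero = s'$ in both cases (Eq.~\ref{eq.determ}), so by the Markov property of the {\eM} (a hidden Markov chain in which the law of $X_1^\infty$ depends on the past only through $\Szero$) the conditional law of $X_1^\infty$ is the same in both cases, namely the future law generated from $s'$; together with $X_0 = x$, the conditional law of the whole future $X_0^\infty$ is then the same under the two events. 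If $\Sone$ were almost surely a measurable function $g$ of $X_0^\infty$, then $g$ would have to equal $s_1$ almost surely under this common conditional law and also equal $s_2$ almost surely under it --- impossible. Hence $H(\Sone\mid X_0^\infty) > 0$, and by Eq.~\ref{eq.ECmu}, $\Cmu > \EE$.

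I expect the main obstacle to be exactly this last step: pinning down in distributional terms what ``logical irreversibility'' costs --- that merging two states on a common output symbol erases precisely the information distinguishing the two incoming states from the vantage point of the future --- and handling ``$\Sone$ is a function of the infinite future'' carefully, e.g.\ via $H(\Sone\mid X_0^\infty) = \inf_N H(\Sone\mid X_0^N)$, which is valid because $\Sone$ is finitely valued. The forward implication, by contrast, needs only that these limits interact well with the synchronisation bound, Eq.~\ref{eq.sync}.
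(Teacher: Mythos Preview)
Your proof is correct. The reduction to $H(\Sone\mid X_0^\infty)=0 \Leftrightarrow \ier=0$ via Eq.~\ref{eq.ECmu}, and your backward direction, coincide with the paper's: your ``compose $\rho$'' step is exactly the paper's telescoping $H(S_{N-k}\mid X_0^N)=H(S_{N-k+1}\mid X_0^N)$ down to $k=N+1$, followed by the synchronising limit (Eq.~\ref{eq.sync}).

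The forward direction is where you and the paper diverge in style. The paper argues directly: from the causal-state Markov property it obtains the identity $H(\Sone\mid X_0 \Szero)=H(\Sone\mid X_0^N \Szero)$ for every $N$, so $H(\Sone\mid X_0^\infty)=0$ forces (after adding $\Szero$ to the conditioning, which cannot increase entropy) $H(\Sone\mid X_0^\infty \Szero)=0$ and hence $\ier=H(\Sone\mid X_0 \Szero)=0$. You run the contrapositive instead, exhibiting an explicit merge $s_1,s_2\to s'$ on symbol $x$ and arguing that the two incoming events induce the same conditional law on $X_0^\infty$, so $\Sone$ cannot be a function of the future. Both routes rest on the same fact---the law of $X_1^\infty$ depends on the past only through $\Szero$---but the paper packages it as a short chain of entropy identities, while your argument makes the operational content (a merge renders the incoming states indistinguishable from the vantage of the future) explicit. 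The paper's route is terser; yours is more probabilistic and delivers the physical picture of erasure directly.
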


\begin{proof}

{\bf '$\Rightarrow$': }

From the Markov property of the states $S_i$ (i.e. $H(X_1 | S_{-1} X_0 S_0) = H(X_1 | X_0 S_0)$) it follows that $H(S_{-1} | X_0 X_1 S_0) = H(S_{-1} | X_0 S_0)$.
Using this recursively and the fact that further conditioning never increases the entropy we obtain the forward direction
\begin{align}
\notag
\Cmu = \EE &\Leftrightarrow H(S_{-1} | X_0^\infty) = 0 \\
\notag
&\Rightarrow H(S_{-1} | X_0^\infty S_0) = 0 \\
\label{eq.4}
&\Rightarrow H(S_{-1} | X_0 S_0) = 0
\end{align}

{\bf '$\Leftarrow$':}
Going in reverse we 
have $ H(S_{N-1} | X_N S_N) = 0 ~ \Rightarrow H(S_{N-1} | X_0^N S_N) = 0$ and in addition
\begin{align}
H(S_{N-1} | X_0^N S_N) = 
 H(S_{N-1} | X_0^N) - H(S_{N} | X_0^N)~.
\end{align}
The second term on the RHS goes to zero in the limit $N\to \infty$  (Eq.~\ref{eq.sync}). 
In the same way $H(S_{N-k} | X_0^N S_{N-k+1}) \to H(S_{N-k+1} | X_0^N)$, $ k=2,3,\dots,N+1$, as $N\to \infty$. Setting $k=N+1$, the claim follows.
\end{proof}

Note that this result automatically implies that any {\eM} with $ \ier = 0$ can be inverted in time, turning it into a {\it retrodicter} as introduced in \cite{crutchfield_times_2009} and thus providing an immediate and quick construction of a retrodicter for this case saving the `modeler' a second computationally costly inference procedure (for more on computational cost see e.g. \cite{dana_complexity_1978}).

For future reference we call $H(X_{i} | {\State}_{i}):=\hR $ the uncertainty of the last symbol given the current state. This reverse entropy rate is different from the reverse entropy rate referred to in dynamical systems theory (see e.g. \cite{gaspard_time-reversed_2004}). The complimentary quantity to $\hR$ is the well-known entropy rate of a stochastic process $\h = H(X_{i+1} | {\State}_{i})$. We can see that the amount of information which can be erased per computational step is upper bounded by the amount of information which is created per computational step,
\begin{align}
\ier \leq \h~,
\end{align}
by writing the joint entropy $H({\State}_{i-1} X_{i} {\State}_i)$ as two different sums:
\begin{align}
\notag
H({\State}_{i} | {\State}_{i-1} X_i) &+ H({\State}_{i-1} X_i) =H({\State}_{i-1} | X_{i} {\State}_i) + H(X_{i} {\State}_i) \\
\notag
\Leftrightarrow ~~&H(X_{i} | {\State}_{i-1}) + H({\State}_{i-1})  \\
\notag
&= H({\State}_{i-1} | X_{i} {\State}_i) + H(X_{i} | {\State}_i) + H({\State}_i)\\
\label{eq.HSXS}
\Leftrightarrow ~~& h^R = \ier + h ~,
\end{align}
where, in the second line, we have used Eq.~\ref{eq.determ}. 

Now consider a  {\eM} ${\mathcal M}$ with $\ier > 0$. To derive the difference between $\EE$ and $\Cmu$ we construct an {\eM} which outputs more than one symbol at a time as follows. For every pair of states $\state_i,\state_j \in \States$ of {\eM} ${\mathcal M}$  we construct the $N^{th}$ concatenation ${\mathcal M}^{\otimes N}$  with state transition probabilities 
$\Pr(\state_j | \state_i x^N) = \sum_{\state_1^{N-1}\in \State^{N-1}} \Pr(\state_{j} | \state_{N-1} x_{N}) $ $\left[ \prod_{k=2}^{N} \Pr(\state_{k} | \state_{k-1} x_{k}) \right] $ $\cdot \Pr(\state_1 | \state_i x_1)$
upon outputting $x^N$, where the sum runs over all state sequences $\state_1^{N-1}$ of length $N-1$.
Note, that ${\mathcal M}$ and ${\mathcal M}^{\otimes N}$ have the same set of states and the same probability distribution over output strings $\Pr(X^N)$ -- so they have the same $\EE$ and $\Cmu$. 

\begin{thm}
\label{thm.erase}
 The difference between the statistical complexity $\Cmu$ and the {\PI} $\EE$ of a stochastic process approaches the information erased by the corresponding concatenated {\eM} ${\mathcal M^{\otimes N}}$ as $N\to\infty$:
$$\lim_{N\to\infty} \ier^N = \Cmu - \EE~. $$
\end{thm}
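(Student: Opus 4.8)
The plan is to obtain a closed form for $\ier^N$ at finite $N$ and then pass to the limit term by term. By stationarity $\ier^N$ does not depend on where the block of $N$ transitions of $\mathcal M$ making up one step of $\mathcal M^{\otimes N}$ is placed; taking that block to occupy positions $1,\dots,N$ and reading the erasure definition (Eq.~\ref{eq.ier}) for $\mathcal M^{\otimes N}$ — whose emitted symbol is the length-$N$ block $X_1^N$ and whose last and current states are $S_0$ and $S_N$ — gives
\[
\ier^N \;=\; H(S_0 \mid X_1^N\, S_N).
\]
Iterating the deterministic--stochastic property (Eq.~\ref{eq.determ}) along the internal path $S_0\to S_1\to\dots\to S_N$ shows $H(S_N \mid S_0\, X_1^N)=0$; expanding $H(S_0 S_N \mid X_1^N)$ by the chain rule in the two orders and using this then yields
\[
\ier^N \;=\; H(S_0 \mid X_1^N) - H(S_N \mid X_1^N),
\]
which is the analogue, one level up at $\mathcal M^{\otimes N}$, of the splitting used in the $\Leftarrow$ direction of Theorem~1.

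Next I would let $N\to\infty$ in the two terms separately. The subtracted term $H(S_N \mid X_1^N)$ is the uncertainty in a causal state given a growing window of the symbols that immediately precede it; by stationarity it equals $H(S_{N-1}\mid X_0^{N-1})$, which tends to $0$ by the synchronising property (Eq.~\ref{eq.sync}). The first term, $H(S_0 \mid X_1^N)$, is the uncertainty in a causal state given a growing window of the symbols that \emph{follow} it; it is non-increasing in $N$ and its limit is $H(S_0 \mid X_1^\infty)$, the uncertainty in a causal state given the entire future. That limit equals $\Cmu-\EE$: it is Eq.~\ref{eq.ECmu} after a unit stationarity shift; or, directly, $H(S_0\mid X_1^\infty)=H(\States)-I[S_0 ; X_1^\infty]$ with $H(\States)=\Cmu$ and $I[S_0;X_1^\infty]=I[X_{-\infty}^0;X_1^\infty]=\EE$ by sufficiency of the causal state. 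Combining, $\lim_{N\to\infty}\ier^N=(\Cmu-\EE)-0=\Cmu-\EE$. As a sanity check, if $\ier=0$ then $\mathcal M^{\otimes N}$ inherits reversibility, so $\ier^N\equiv0$, consistent with $\Cmu=\EE$ from Theorem~1.

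The only index bookkeeping that needs care is verifying that the $N$ symbols conditioned on in $\ier^N$ are exactly those emitted along the internal path from $S_0$ to $S_N$ — both so that the iterated determinism applies and so that, after the stationarity shift, the surviving term lines up with the left-hand side of Eq.~\ref{eq.ECmu} rather than being off by one symbol. The one genuinely analytic point, and the main obstacle, is the interchange $\lim_{N\to\infty} H(S_0 \mid X_1^N) = H(S_0 \mid X_1^\infty)$; I would justify it from monotonicity of conditional entropy in the conditioning (the sequence is non-increasing and bounded below) together with $H(\States)<\infty$, so that the limit is the conditional entropy with respect to the full future $\sigma$-algebra. Everything else reduces to the chain rule and the two structural facts about {\eM}s already recorded in Eqs.~\ref{eq.determ} and~\ref{eq.sync}.
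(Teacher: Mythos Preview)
Your proposal is correct and follows essentially the same route as the paper: both expand $H(S_0 S_N\mid X_1^N)$ (equivalently $H(S_{-1}X_0^N S_N)$ in the paper's indexing) by the chain rule in two ways, kill one term by iterated determinism (Eq.~\ref{eq.determ}), send the $H(S_N\mid X_1^N)$ term to zero via synchronisation (Eq.~\ref{eq.sync}), and identify the surviving limit with $\Cmu-\EE$ through Eq.~\ref{eq.ECmu}. Your handling of the index alignment and of the monotone-convergence step for $H(S_0\mid X_1^N)\to H(S_0\mid X_1^\infty)$ is in fact more explicit than the paper's, which simply passes to the limit without comment.
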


\begin{proof}
By rewriting $H(\State_{-1} X_0^N \StateN)$ in two different ways we obtain
the information-theoretic equality: 
\begin{align}
\notag
H(\Sone | X_0^N) = &H(\Sone | X_0^N \State_N) \\
	&+ H(\State_N | X_0^N) - H(\State_N |\State_{-1}X_0^N)
\end{align}
The last term is zero due to determinism (Eq.~\ref{eq.determ}), the second term goes to zero for $N\rightarrow
\infty$. Hence, taking the limit we obtain:
\begin{align}
H(\State_{-1} | X_0^N) \rightarrow H(\State_{-1} | X_0^N \State_N) \text{ as } N\rightarrow\infty~.
\end{align}
The term on the right-hand side is exactly $\ier$ of ${\mathcal M^{\otimes
N}}$. Letting $N\to\infty$ the claim follows.
\end{proof}

Using Theorem~\ref{thm.erase} we can derive the minimum energy cost of simulating a stochastic process. Fig.~1 schematically illustrates an {\eM} contained in a box which on consecutive time steps outputs symbols  visible to an outside observer. The computational steps are as follows. In (a) the {\eM} in the box is in state $\State_{i-1}$. Going to (b) it generates symbol $X_i$ according to the probability distribution $\Pr(X_i | S_{i-1})$, leading to an increase in entropy inside the box by $h = H(X_i | \State_{i-1})$. Going to (c) the {\eM} moves from state $\State_{i-1}$ to state $\State_i$. Erasure of the previous-state information causes a decrease in entropy inside the box by $H(\State_{i-1} | \State_i X_i) = \ier $.  Finally, in (c) the symbol is ejected into the environment which decreases the entropy inside the box again, this time by $H(X_i | S_i) = h^R$. This closes one cycle of computation. The entropy contributions during one closed cycle must add up to zero and we obtain $\h - \ier - \hR = 0$ which is exactly Eq.~\ref{eq.HSXS}.

We now modify this stochastic computation by allowing for the generation of $N+1$ symbols at a time. The {\eM} inside the box (Fig.~1) 
is replaced by the concatenated {\eM} ${\cal M}^{\otimes (N+1)}$. In (a) this machine starts out in state $\State_{i-1}$, going to (b) it generates $N+1$ symbols according to $\Pr(X_{i}^{i+N} | S_{i-1})$. This causes an increase in entropy inside the box by $H(X_{i}^{i+N} | S_{i-1})$. Going to (c) the concatenated {\eM} moves to state $\State_{i+N}$ erasing the previous-state information which decreases the entropy inside the box by $H(\State_{i-1} | X_i^{i+N} S_{i+N})$. Ejecting the symbol sequence into the environment  the entropy of the box decreases by $H(X_i^{i+N} | S_{i+N})$. 
Setting w.l.o.g. $i=0$, we obtain for the entropy balance of one computational cycle:
\begin{align}
 H(X_0^{N} | \State_{-1}) - H(\State_{-1} | X_0^{N} \State_{N}) - H(X_0^N | \State_N)= 0~.
\end{align}
The LHS can be rewritten as 
\begin{align} H(\State_{-1}) - H(\State_{-1} | X_0^N \State_N) - I(\State_{-1};X_0^N)  
\end{align}

Letting $N\to \infty$ we obtain 
\begin{align}
\LL + \EE - \Cmu= 0~.
\end{align}

Hence, the entropy balance of one cycle of stochastic computation is in the limit of an infinite string of output given by Eq.~\ref{eq.ECmu} and Theorem~\ref{thm.erase}.
We now discuss the thermodynamics of such a stochastic computation.

\begin{figure}
\begin{center}
\label{fig.cycle}
\includegraphics[scale=.4]{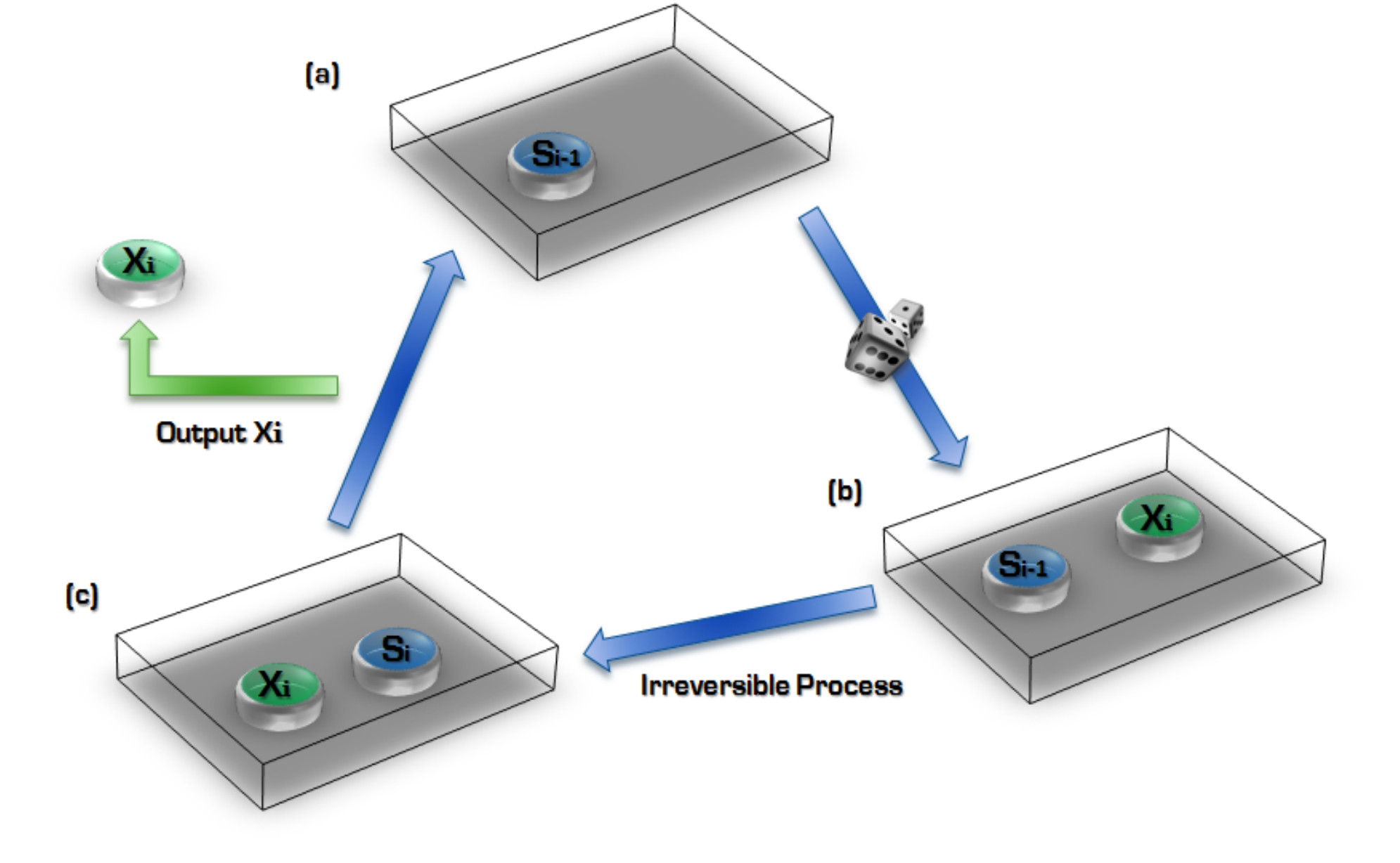}
\caption{One computational cycle for simulating a stochastic process with an {\eM}.}
\end{center}
\end{figure}

The Carnot efficiency of one cycle of an engine consisting of an ideal gas in a cylinder alternately connected to a hot and a cold reservoir at temperature $T_H$ and $T_C$, respectively, is, as is well known, given by the ratio of work output $W$ and absorbed heat $Q_H$:
\begin{align}
\label{eq.eta}
\eta = \frac{W}{Q_{H}} = 1-\frac{T_{C}}{T_{H}}~.
\end{align}
In 1929, Szilard invented a Gedankenexperiment of a single-particle engine to resolve Maxwell's demon paradox which seemed to defy the second law of thermodynamics \cite{szilard_entropieverminderung_1929}. The engine consists of a particle in a box, a measurement device locating the particle in either half of the box, and a memory to store the measurement result. Szilard considered the following procedure for extracting work from the particle's thermal motion. A user measures the particle's position and stores this one bit of information in the memory. Subsequently she 'compresses' the box to the half which contains the particle. This does not require any work. The thermal motion of the particle `decompresses' the box again and hence lets the user extract work from the box without any cost. This apparent paradox is resolved when one factors the additional energy required to for the user to reset her memory \cite{landauer_irreversibility_1961}. With the memory initially at temperature $T_H$ and  the reset done at temperature $T_C$ the efficiency of this engine is given by the Carnot efficiency (Eq.~\ref{eq.eta}) and the laws of thermodynamics were restored.

This Gedankenexperiment can be extended to a particle in one of $2^\EE$ possible partitions. Hence, storing the measurement result of the particle's  position requires $E$ bits of memory. Following the same argument as for the original Szilard engine we obtain Carnot efficiency $\eta = 1-T_{C}/T_{H}$. This modified engine leads us directly to a new interpretation of Theorem~\ref{thm.erase}.

In the simulation of a stochastic process, we attempt to generate information about its future based on observations of the past. This may be viewed as a Gedankenexperiment where we attempt to maximize our knowledge about a particle whose position is governed by a random variables $X_0,X_1,...$ which we can only indirectly measure by recording appropriate information from a correlated random variables $X_{-1},X_{-2},...$. These recorded bits can the be translated to information about $X_0,X_1,...$  by the use of an appropriate simulator. To extract the maximum possible amount of information, $\EE$, the theory of {\eM}s dictates that we must record at least record $\Cmu$ bits. The minimality and optimality of {\eM}s ensures that any fewer bits would render the simulation sub-optimal. This results in a stochastic computation that allows to extract

$Q_H^{sc} =  kT_H  \EE \ln 2$ units of extra work. Meanwhile, the $\Cmu$ bits stored about the past  are erased, which costs $Q_C^{sc} = kT_C \Cmu \ln 2$ units of energy. 
The efficiency is, just like before, the ratio between output work and absorbed heat:
\begin{align}
\frac{W^{sc}}{Q_H^{sc}} = 1  -\frac{Q_C^{sc}}{Q_H^{sc}} &= 1 - \frac{\Cmu T_C}{\EE T_H}~.
\end{align}
We define the information-theoretic efficiency for computing a stochastic process:
\begin{align}
\iota := \frac{\EE}{\Cmu}  = 1-\frac{\LL}{\Cmu}~.
\end{align}
Combining the thermodynamic and information theoretic efficiencies we obtain 
\begin{align}
\frac{W^{sc}}{Q_H^{sc}} = \eta (\iota) = 1 - {\iota^{-1}} \frac{T_H}{T_C}~.
\end{align}
For maximal information-theoretic efficiency we recover the thermodynamic efficiency from Eq.~\ref{eq.eta}. 

$\EE/\Cmu$ has been named the ``predictive efficiency of a process as the fraction of the information it contains which actually effects the future''  \cite{shalizi_what_2003}. Our results supply this concept with physicality and mathematical rigour.

We have derived the minimum energy cost of simulating a physical system as the difference between two information-theoretic complexity measures of the data. Of the two measures, the {\PI} measures the amount of information stored about a process's past transmitted to the future, the statistical complexity measures the amount of information required to compute this future. Any difference between the two is given by the amount of information erased during the simulation of the data and hence represents the minimum energy cost of physically running a simulation.

This result is complementary to the discussion of Crutchfield et al. who derive the difference between the two measures from the asymmetry of running the process in forward and reverse \cite{crutchfield_times_2009}. We add to this a physical interpretation of the cost of reversing a computation using thermodynamics. The lower bound to the energy cost of simulating a physical system was derived for optimal classical simulators. Recent results that quantum simulators require less information storage could indicate that quantum information leads to a reduced cost for stochastic computation \cite{gu_sharpening_2011}. 
Our results reveal an intricate relation between thermodynamics, information processing, and complexity. They motivate the use of information-theoretic tools for studying the physics of complex systems. 

\bibliography{lecs}

\end{document}